\documentclass[10pt,conference]{IEEEtran}
%!TEX TS-program = pdflatex
%Completely insane addition to force textmate to use pdflatex

%\documentclass[11pt,draftclsnofoot,peerreviewca,letterpaper,onecolumn]{IEEEtran}
%\documentclass[11pt,final,journal,letterpaper,twoside,onecolumn]{IEEEtran}
\usepackage{cite}
\usepackage{graphicx,color,epsfig,rotating}
\usepackage{amsfonts,amsmath,amssymb}
\usepackage{algorithm,algorithmic}
\usepackage{subfig}

\setlength\unitlength{1mm}

\long\def\comment#1{}

% bb font symbols

\newfont{\bbb}{msbm10 scaled 700}

\newfont{\bb}{msbm10 scaled 1100}

% Vectors

% Matrices

% Calligraphic

% Bold greek letters

% mixed symbols

\newtheorem{definition}{Definition}
\newtheorem{theorem}{Theorem}
\newtheorem{lemma}{Lemma}

\newtheorem{proof}{Proof}

\begin{document}

\title{Bounding Multiple Unicasts through Index Coding and Locally Repairable Codes}
\author{
\IEEEauthorblockN{Karthikeyan Shanmugam and  Alexandros G. Dimakis } 
\IEEEauthorblockA{ Department of Electrical and Computer Engineering \\
University of Texas at Austin \\
Austin, TX 78712-1684 \\
\texttt{karthiksh@utexas.edu,dimakis@austin.utexas.edu}}
}

\date{\today}

\maketitle

\begin{abstract}
         We establish a duality result between linear index coding and Locally Repairable Codes (LRCs). Specifically, we show that a natural extension of LRCs we call Generalized Locally Repairable Codes (GLCRs) are exactly dual to linear index codes. In a GLRC, every node is decodable from a specific set of other nodes and these sets induce a recoverability directed graph. We show that the dual linear subspace of a GLRC is a solution to an index coding instance where the side information graph is this GLRC recoverability graph. We show that the GLRC rate is equivalent to the complementary index coding rate, \textit{i.e.} the number of transmissions saved by coding. 

Our second result uses this duality to establish a new upper bound for the multiple unicast network coding problem. In multiple unicast network coding, we are given a directed acyclic graph and $r$ sources that want to send independent messages to $r$ corresponding destinations. Our new upper bound is efficiently computable and relies on a strong approximation result for complementary index coding. We believe that our bound could lead to a logarithmic approximation factor for multiple unicast network coding if a plausible connection we state is verified. \end{abstract}

\section{Introduction}

Index coding is a stylized noiseless broadcasting problem with receiver side information. 
It is extremely simple to describe and was introduced by Birk and Kol~\cite{birk1998informed} motivated by a satellite broadcasting application. Despite this initial simplicity, the problem been proven tremendously challenging and theoretically deep. 
Bar-Yossef \textit{et al.}~\cite{bar2006index} studied the problem graph theoretically where it was shown that the scalar linear optimal solution is related to a rank minimization problem over a finite field. It turns out that (for a given field size), scalar linear index coding is equivalent to a graph theoretic quantity $\mathrm{minrank}$ introduced by 
Haemers~\cite{haemers1978upper} in 1978 to obtain a bound for the Shannon graph capacity~\cite{shannon1956zero}. It is known that finding the length of the optimal scalar linear index code is computationally intractable to find and hard to approximate within a constant factor \cite{langberg2011hardness,peeters1996orthogonal}. 

Interest in index coding is further increasing due to two recent developments: 
The first is that it was recently shown~\cite{el2010index,effros2012equivalence} that any arbitrary 
network coding problem with potentially multiple sources and receivers can be mapped to a properly constructed index coding instance. Therefore, statements about index coding can be translated to constructions or bounds for general networks, showing the surprising expressiveness of the problem. 
Second, deep connections between interference alignment and index coding are being discovered 
\cite{bar2011index}\cite{maleki2012index}\cite{jafar2013topological} bringing an arsenal of new techniques for index code constructions. Further, there have been information theoretic approaches to this problem \cite{Kim2013}\cite{unal2013}.

\subsection{Our Contributions:} 
\noindent We establish two main results: The first is a duality between linear index coding and Locally Repairable Codes (LRCs)
\footnote{At the time of submission, we became aware of a concurrent independent work by Mazumdar \cite{arya} establishing similar results. Our work establishes that for vector linear codes,
the dual code (e.g. linear null space) of a GLRC is a valid index code and vice versa. Mazumdar \cite{arya} discusses a more general case of non linear codes. For that case, in one direction, \cite{a rya} shows that existence of a $k$ dimensional GLRC implies the existence of a $n-k+f(n,k,q)$ dimensional index code where the function $f$ can be found in \cite{arya} and $q$ is the field size used. Our result does not have the $f(n,k,q)$ gap term but only applies to vector linear codes.}. 
Locally repairable codes were recently developed~\cite{oggier2011self,gopalan2012locality,prakash2012optimal,papailiopoulos2012simple,papailiopoulos2012locally} to simplify repair problems for distributed storage systems and are currently used in production~\cite{huang2012erasure}. Here, we show that a natural extension that we call Generalized Locally Repairable Codes (GLCRs) are exactly dual to linear index codes. 
Specifically, in a GLRC, every node is decodable from a specific \textit{recoverability} set of other nodes. These specifications induce a \textit{recoverability directed graph}. We show that the dual linear subspace of a GLRC is a solution to an index coding instance where the side information graph is taken to be the recoverability graph of the GLRC. Therefore, the rate of the GLRC is the redundancy of the index code. The redundancy of the index code is called the \textit{complementary index coding} rate in the literature~\cite{compindex}. This quantity is the number of transmissions \textit{saved} in the index coding problem. Our proof relies on simple linear algebra and gives a clear connection between code locality and index coding. 
 
Our second result uses this duality to establish a new upper bound for the multiple unicast network coding problem. In multiple unicast network coding, we are given a directed acyclic graph and $r$ sources that want to send independent messages to $r$ corresponding destinations. It is one of the most fundamental network coding problems and has been extensively studied (\textit{e.g.}~\cite{das2010network,ramakrishnan2010network,ho2008network} and references therein). Recent work~\cite{kamath2011generalized,kamath2013study} established 
upper bounds on the multiple unicast sum rate. These bounds either involve edge cut bounds or linear programs involving Shannon inequalities. To the best of our knowledge, these require complexity exponential in the network size to evaluate and it is not known how the gap from achievable schemes can scale. 

We obtain a new upper bound for the sum rate of the optimal vector-linear code for multiple unicast network coding. 
Our bound is established in four steps that are pictorially shown in Fig.\ref{fig:Flowdiag}. 
The first step is bounding the sum rate of a multiple unicast code $R^{MU}$ by the rate of an artificial problem that 
we call correlated unicast coding problem. This problem is a multiple unicast problem that allows an arbitrary correlation between sources but penalizes for joint entropy rate loss. Subsequently, we show that a correlated unicast code is equivalent to a GLRC defined on a suitable recoverability graph. In this equivalence, the joint entropy rate $R^{CO}$ of the correlated unicast code equals the rate of the GLRC. With duality, this is equivalent to linear index coding on the same graph. The last step is to deploy a previous result which showed that complementary index coding can be well-approximated ~\cite{comp index}.  
Our bound can be computed in polynomial time and relies on an approximate cycle packing computed on an index coding instance obtained after transformations. 

We note here that the approximation result for complementary index coding relies on deep results~\cite{seymour1995packing,2004packing} from combinatorial optimization and imply no interesting approximation results for index coding capacity. Our duality result allows us to obtain strong approximation results for GLRC and through our steps for the multiple unicast problem because it maps directly to the complementary index coding problem. 
In this work, we do not rely on the index coding equivalence to general network coding~\cite{effros2012equivalence}.
Further, we emphasize that all our results are valid for linear and vector-linear problems only. 

Finally, we believe that the sum rate of correlated unicasts is equal to the sum rate of multiple unicasts. This, if true, combined with our results would yield an approximation for multiple unicasts within a $\log n \log\log n$ factor which would be a breakthrough. 

	  \begin{figure}
  \centering
   \includegraphics[width=9cm]{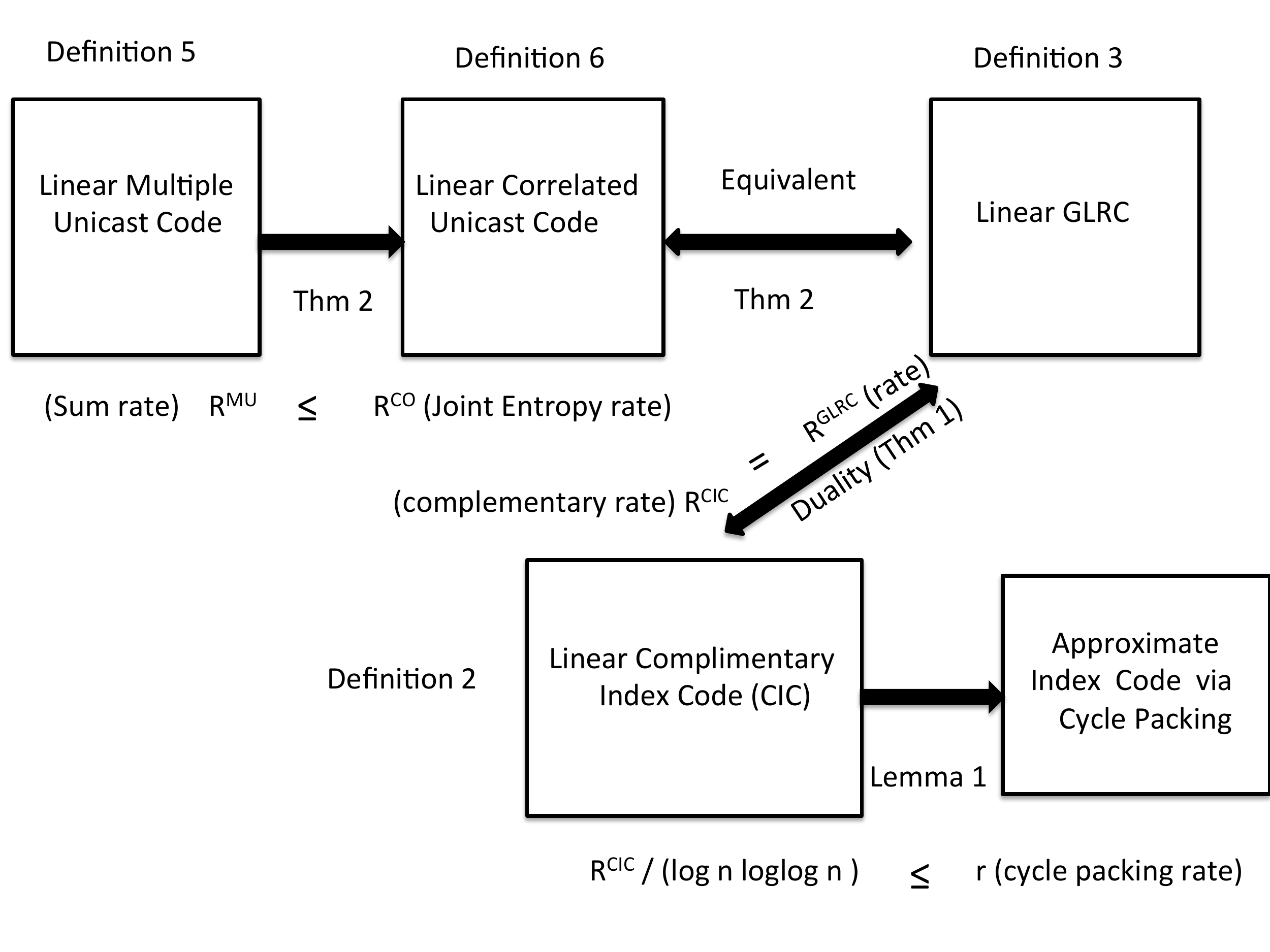}
   \caption{ A pictorial overview of the steps involved for bounding the sum rate of the multiple unicast problem. Theorem  \ref{thm:upperbound} initially shows how the sum rate of a multiple unicast code is upper bounded by the joint entropy of a new code, over the same network, which we call the correlated unicast code.  Theorem \ref{thm:upperbound} shows that a correlated network code is equivalent to a Generalized Locally Repairable Codes (GLRCs) over a suitable recoverability digraph. Theorem \ref{thm:duality} shows the equivalence of GLRC to complementary index coding problem which is subsequently approximated using the work of \cite{comp index}. }
   \label{fig:Flowdiag}
 \end{figure}   

\section{Definitions}

         In this section, we formally define a vector linear Index Code (IC),  a vector linear Generalized Locally Repairable Code (GLRC), a vector linear multiple unicast code and a vector linear correlated unicast code. In this work, we use the terms linear code and subspace interchangeably. In the subsequent sections, we show a duality relation between the first two entities and use it to derive tractable upper bounds on the optimal linear sum rate of the third.
         
         \begin{definition}
                   An \textit{index coding} problem instance is given by $n$ distinct messages, $\mathbf{x}_i ~ 1 \leq i \leq n$ with $\mathbf{x}_i \in \Sigma^{p}$, each intended for a distinct user among a set of $n$ users. Every user has some side-information which is described by a set of indices, $S_i \subseteq \{1,2,3 \ldots n \}$, such that $j \in S_i$ implies that user $i$ has packet $\mathbf{x}_j$ as side information and $i \notin S_i$. This is represented by a directed side information graph $\bar{G} (V,E)$ where each vertex represents a user and a directed edge from $i$ to $j$ is present  if $j \in S_i$.  $\hfill \lozenge$
         \end{definition} 
         
      For ease of notation, let $\mathbf{x}= \left[\mathbf{x}_1^T ~\mathbf{x}_2^T \ldots \mathbf{x}_n^T \right]^T$. The objective is to design suitable transmission schemes such that each user decodes its desired packet from the encoded transmission and the side information packets available with them. Formally, a vector linear index code, which represents a linear transmission scheme, is defined as follows:  
         
         \begin{definition}
             A valid $\left( \Sigma,p,n,k \right)$ \textit{vector linear index code}, for an index coding problem on $\bar{G}(V,E)$, is a collection of $k$ linear encoding vectors $\mathbf{v}_i \in \Sigma^{pn \times 1}$ spanning a subspace ${\cal C} \in \Sigma^{pn}$ of dimension $k$ such that, from the $k$ broadcast transmissions $\mathbf{v}_i^T \mathbf{x}$, all users are able to decode their respective packets using their side-information using linear decoding. In other words, there are decoding functions $\phi_i: \phi_i( \{\mathbf{v}_i^T \mathbf{x} \}_{i=1}^k, \{ \mathbf{x}_j\}_{j \in S_i} ) = \mathbf{x}_i, ~\forall i$ which are linear in all the arguments (in all the subsymbols belonging to $\Sigma$).   $\hfill \lozenge$       
         \end{definition}
         
         The broadcast rate of the index code is given by $k/p $ since every channel use consists of $p$ symbols from the alphabet $\Sigma$. The total number of transmissions is $k$ in terms of the alphabet $\Sigma$. The total number of transmissions that is needed if side information is not present is $np$. The index code ${\cal C}$ has the following generator matrix with the encoding vectors $\mathbf{v}_i$ as the rows.
          \begin{equation}\label{Eqn:genind}
               \mathbf{V}= \left[
                 \begin{array}{c}
                   \mathbf{v}_1^T \\
                   \mathbf{v}_2^T \\
                    \cdot         \\ 
                    \cdot          \\  
                    \mathbf{v}_k^T
                 \end{array}
                \right].
          \end{equation}
    $\mathbf{y}=\mathbf{V}\mathbf{x}$ is the vector containing the $k$ encoded transmissions corresponding to the index code ${\cal C}$. The \textit{complementary index coding} problem is essentially the same as the index coding problem except that the objective is to maximize the number of transmissions saved. The number of saved transmissions is $(np-k)$. The \textit{complementary index coding} rate is given by $\left(n - k/p\right)$ since $\log \left( \Sigma \right)$ bits are transmitted every channel use. Let $R^{CIC} \left( \bar{G} \right)$ be the maximum complementary index coding rate over all the linear codes for the side information graph $\bar{G}$.         
         
         \begin{definition}
             A $\left( \Sigma,p,n,k \right)$ vector linear \textit{generalized locally repairable code} (GLRC) of dimension $k$ is a $k$ dimensional subspace ${\cal C} \subseteq \Sigma^{pn} $ where each set of $p$ subsymbols is grouped into one codeword supersymbol. Further, a codeword supersymbol $i$ satisfies the following recoverability condition: every subsymbol of the $i$th supersymbol is a linear combination of the subsymbols belonging to a set $S_i$ of codeword supersymbols not containing $i$. These conditions can also be represented in the form of a directed recoverability graph $\bar{G}(V,E)$ where the vertices correspond to the $n$ supersymbols and the directed out-neighborhood of a vertex $i$ is the recoverability set $S_i$.  $\hfill \lozenge$
         \end{definition}

	   A GLRC ${\cal C}$ is said to be \textit{valid} on the recoverability digraph $\bar{G}$ if it satisfies the conditions given by the digraph. The generator matrix, of dimensions $k \times pn$, for the code ${\cal C}$ is given by:
	    \begin{equation}\label{Eqn:genGLRC}
	      \mathbf{G}= \left[ \mathbf{g}_{11}~ \mathbf{g}_{12} \ldots \mathbf{g}_{1p} ~ \mathbf{g}_{21} ~ \ldots \mathbf{g}_{np} \right]. 
	    \end{equation} 
	   Here, $\mathbf{g}_{ij} \in \Sigma^{k \times 1},~ 1 \leq i \leq n,~ 1 \leq j \leq p$ is the coding vector that determines the $j$th subsymbol of the supersymbol $i$ in a codeword through a linear combination of $k$ message subsymbols.  Let $\mathbf{u} \in \Sigma^{k \times 1}$ be the message to be encoded using the code ${\cal C}$. The codeword corresponding to this, containing $n$ supersymbols, is generated by $\mathbf{u}^T\mathbf{G}$. The recoverability conditions imply that $\mathbf{g}_{ij} \in \mathrm{span} \left( \left\{ \mathbf{g}_{ab} \right\}_{a \in S_i, 1 \leq b \leq p} \right), ~\forall 1 \leq j \leq p$. The \textit{normalized rate} of the GLRC is given by $k/p$. The maximum normalized rate over all the linear codes for a given recoverability graph $\bar{G}$ is denoted by $R^{GLRC} \left( \bar{G} \right)$. 
	   
	       \begin{figure*}
	    \centering
	       \includegraphics[width=11cm]{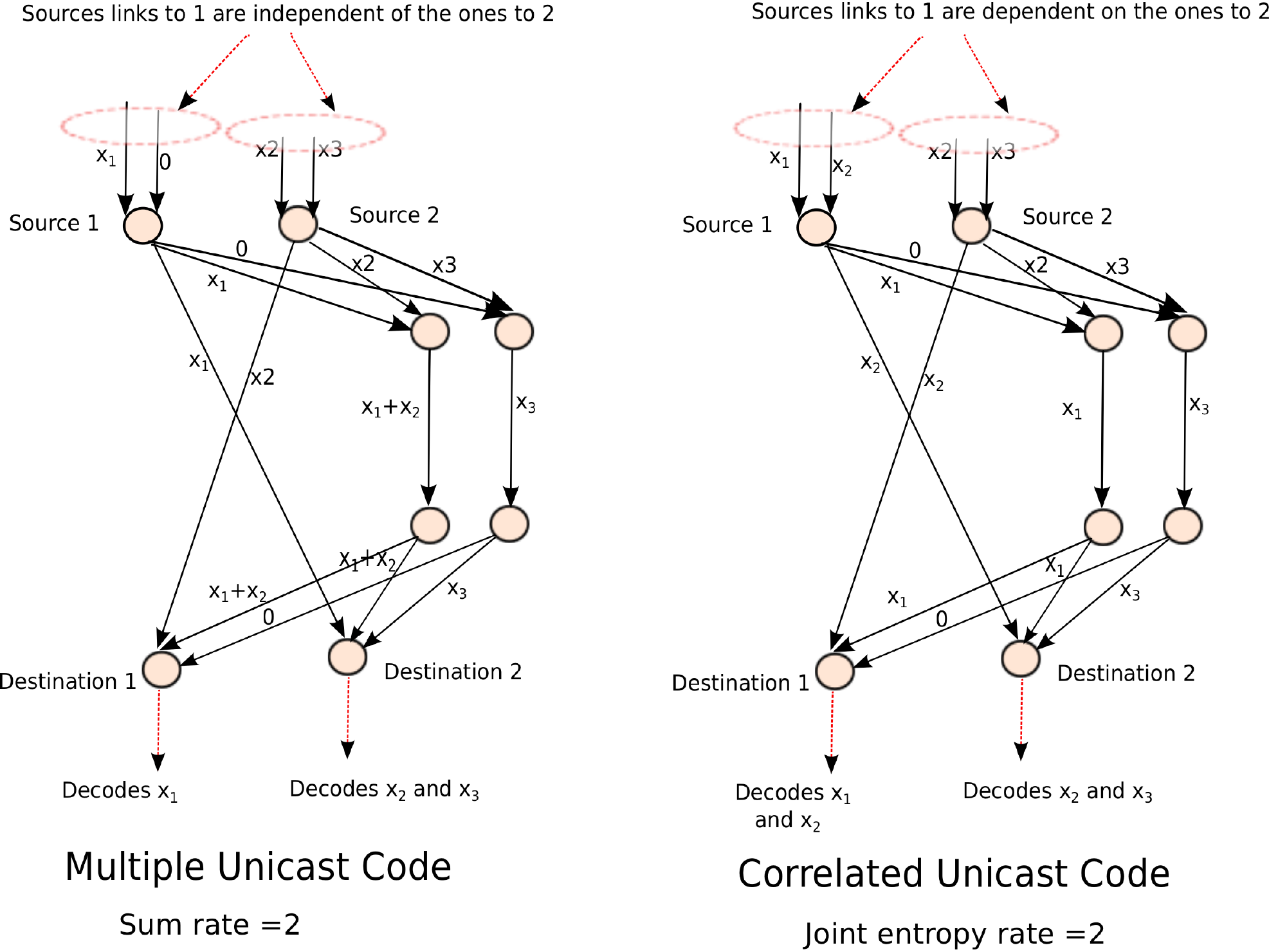}
	     \caption{Illustration of an example multiple unicast network coding instance along with a multiple unicast code and a correlated unicast code. In both cases, destinations decode everything sent along the source links. But the source links across sources are independent in the case of multiple unicast but not so for the correlated unicast code.}
	     \label{fig:network}
	    \end{figure*}

	Now, we provide some definitions regarding the multiple unicast network coding problem.      
	    \begin{definition}
	             A \textit {multiple unicast network coding} instance is given by an acyclic directed network ${\cal G}({\cal N}, {\cal L} )$ that has the following components:
	       \begin{enumerate}      
	             \item ${\cal N}$ is the set of nodes and ${\cal L}$ is the set of directed links each of unit capacity. Unit capacity implies that an edge carries at  most one bit per channel use. A link is denoted by $e$. $h(e)$ denotes the head of edge $e$ and $t(e)$ denotes the tail of edge $e$. Any pair of nodes may have one or more unit capacity links connecting them.
	             \item  (\textit{Source and Destination nodes}) ${\cal S} \subseteq {\cal N}$ is a set of source nodes denoted by $s_1,s_2 \ldots s_r$ where $r= \lvert {\cal S}\rvert$. ${\cal D} \subseteq {\cal N}$ is a set of destination nodes with ${\cal D}= \{d_1,d_2 \ldots d_r \}$.
	             \item (\textit{Source links})  There are source links ${\cal E}_i \subset {\cal L},~1 \leq i \leq r$ such that $h(e)=s_i,~\forall e \in {\cal E}_i$ and these source links do have any tail nodes. They represent information being fed into the network. Further, we place another restriction that $\lvert {\cal E}_i \rvert= \mathrm{mincut} \left(s_i,d_i \right) $. Here, $\mathrm{mincut}(s_i,d_i)$ is the number of edges in the minimum cut between source $i$ and destination $i$.  
	          \end{enumerate}   $\hfill \lozenge$
	    \end{definition}            
	 For ease of notation, let $m=\lvert {\cal L} \rvert$.   
	    \begin{definition}
	          A valid vector linear \textit{multiple unicast network code}, for the network instance ${\cal G}$ with $r$ sources, of dimension $k$ is a subspace ${\cal C} \in \Sigma^{pm}$. A group of $p$ symbols is grouped into a supersymbol and there is a supersymbol for every link $e \in {\cal L}$. Let ${\cal L}= \{e_1,e_2 \ldots e_m \}$. Let $\mathbf{z} \in {\cal C}$ and $\mathbf{z}_e \in \Sigma^{p \times 1}$ represent a supersymbol (a vector of $p$ subsymbols) corresponding to edge $e$. Then, $\mathbf{z}= \left[ \left(\mathbf{z}_{e_1} \right)^T \ldots \left(\mathbf{z}_{e_m} \right)^T \right]$ is the vector of all supersymbols. Let $\mathbf{G} \in \Sigma^{k \times mp}$ be the generator matrix of the code comprising columns $\mathbf{g}_{ej} \in \Sigma^{k \times 1}$ for all links $e$ and subsymbols $j$ for $1 \leq j \leq p$. Given a $k \times 1$ message vector $\mathbf{x}$, $\mathbf{x}^T \mathbf{G} =\mathbf{z}$ where $z_{ej}$ is the $j$ th subsymbol on link $e$. Further, they satisfy the following criteria:
	          \begin{enumerate}
	              \item (\textit{Coding at intermediate nodes}) There exists $\phi_e:\mathbf{z}_e = \phi_e \left( \{ \mathbf{z}_a \}_{a:t(a)=h(e)} \right)$ where $\phi_e$ is the local vector linear encoding function at an edge such that every information subsymbol on that edge is a linear combination of all subsymbols arriving at its head.  
	              \item (\textit{Decoding at destinations}) For every source $i$ and $~\forall e \in {\cal E}_i$, $\mathbf{z}_e = \hat{\phi}_e^{i} \left( \{ \mathbf{z}_a \}_{a:t(a)=d_i} \right) $. Here, $\hat{\phi}_e^{i}$ is a vector linear decoding function such that every information subsymbol on a source source link $e$ is decoded by a linear combination of all the subsymbols arriving at its corresponding destination.  
	              \item (\textit{Independence between sources} ) Information arriving at source $i$ through the source links is independent of the information arriving at source $j \neq i$ through its source links.  Formally, in terms of the generator columns, $\mathrm{span} \left( \{ \mathbf{g}_{eb}\}_{e \in {\cal E}_i, 1 \leq b \leq p} \right) \cap \mathrm{span} \left( \{ \mathbf{g}_{eb}\}_{e \in {\cal E}_j, 1 \leq b \leq p} \right) = \emptyset$ for $i \neq j$.	             	              
	          \end{enumerate} $\hfill \lozenge$
	    \end{definition}
	    The entropy of source $i$ is given by $\mathrm{dim} \left( \{ \mathbf{g}_{eb}\}_{e \in {\cal E}_j, 1 \leq b \leq p} \right) \log (\Sigma)$ bits. The joint entropy of all the sources is $k \log (\Sigma)$ bits. In this work, we would not be concerned about the individual source entropies. Because of independence between sources, the sum rate supported by the network equals the joint entropy rate of all sources. The sum rate is given by $k/p$ bits/network use since the network is used $p$ times and each use carries a symbol from the alphabet $\Sigma$. Let $R_{MU} \left({\cal G} \right)$ denote the maximum vector linear sum rate supported by the network ${\cal G}$ over all possible multiple unicast linear codes ${\cal C}$.
	    
	     In the network coding theory parlance, sometimes the columns $\mathbf{g}_{eb}$ are called global encoding functions. The local encoding/decoding functions $\left( \phi_e, \hat{\phi}_e^i \right)$ can be obtained (may not be unique) from global encoding functions $\mathbf{g}_{ej}$.  
	     %If independence of the sources are satisfied, then by relabeling the subspace $\mathrm{span} \left( \{ \mathbf{g}_{eb}\}_{e \in {\cal E}_i, 1 \leq b \leq p} \right)$ associated with source links for source $i$, one can assume that all $g_{ej}$ are either canonical unit vectors or zero vectors for $e \in {\cal E}_i$.   
	    
	     For the purposes of obtaining bounds on $R_{MU} \left({\cal G} \right)$, we define a \textit{correlated unicast} code on the network ${\cal G}$. 
	    \begin{definition}
	          A valid vector linear \textit{correlated unicast code}, on the multiple unicast network instance ${\cal G}$ with $r$ sources, of dimension $k$ is a subspace ${\cal C} \in \Sigma^{pm}$ whose definition is identical to the multiple unicast code except that the last criterion of independence between sources is not imposed. In other words, the subspaces spanned by the generator columns corresponding to the source links for different sources can overlap. $\hfill \lozenge$	          
	     \end{definition}
	     
	  In this case, the joint entropy of all the sources is still given by $k \log (\Sigma)$ bits and the \textit{joint entropy rate} (and not the sum rate) is given by $k/p$ bits per network use. Let $R^{CO} \left( {\cal G}\right)$ denote the optimum joint entropy rate over all correlated unicast codes supported by the network ${\cal G}$.  
	    
	    \textbf{Remark:} A correlated unicast code is not a network code for a multiple unicast correlated sources problem (similar to the multicast model in \cite{ho2004network}) because existence of a correlated unicast code with joint entropy rate $h$ for a unicast instance ${\cal G}$ implies that sources possessing joint entropy rate $h$ exist with \textit{some} correlation among them which can be transmitted through this network code. The sources are allowed to have arbitrary correlation depending on the code.
	    	    
	    As an illustration of the definitions, a multiple unicast network is provided in Fig. \ref{fig:network}. The network is a $2$ unicast network. The mincut between source $i$ and destination $i$ is $2$. Therefore, each source has $2$ source links. Every link has capacity $1$. A multiple unicast code on this network is also provided. In this, the first source sends $x_1$ and the second sources sends $x_2$ and $x_3$. The sources satisfy the independence condition. But for the correlated unicast code, the source links of both sources are correlated. But, that also achieves the joint entropy rate of $2$. Note, that in both cases, every destination decodes whatever the source links carry whether they are correlated with other sources or not.

	  \section{Duality between GLRC and Index Coding}              
	          The main duality result between a GLRC and an Index Code is given by the following theorem:
	          
	      \begin{theorem} \label{thm:duality}   
	             Let ${\cal C}$ be a linear code (or a subspace) of dimension $k$. Let the dual code (or the dual subspace) of ${\cal C}$ of dimension $np-k$ be denoted by ${\cal C}^{\perp} \in \Sigma^{pn}$. Then, ${\cal C}$ is a valid index code for the side information graph $\bar{G}$ iff ${\cal C}^{\perp}$ is a valid GLRC when $\bar{G}$ is taken as a recoverability graph.  $\hfill \lozenge$
            \end{theorem}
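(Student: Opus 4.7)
The overall plan is to translate both conditions -- index decodability on $\mathcal{C}$ and GLRC recoverability on $\mathcal{C}^\perp$ -- into a single linear-algebraic language (subspace containment inside $\Sigma^{pn}$) and then bridge them using the orthogonality identity $(\mathcal{C}^\perp)^\perp = \mathcal{C}$. Throughout, let $\mathbf{e}_{ab}$ denote the standard basis vector of $\Sigma^{pn}$ indexed by the subsymbol $(a,b)$, and for each vertex $i$ of $\bar{G}$ let $W_i = \mathrm{span}\{\mathbf{e}_{ab} : a \in S_i,\ 1 \le b \le p\}$ be the side-information coordinate subspace.

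First I would reformulate the index-coding condition. Linear decodability of the subsymbol $x_{ij}$ from $\mathbf{V}\mathbf{x}$ and $\{\mathbf{x}_a\}_{a \in S_i}$ is, by definition, the existence of scalars $\alpha_l,\gamma_{ab}$ such that
\[
x_{ij} = \sum_l \alpha_l\, (\mathbf{v}_l^T \mathbf{x}) + \sum_{a \in S_i,\, b} \gamma_{ab}\, x_{ab}
\]
holds for every message $\mathbf{x}$. Equating the two coefficient vectors on the left and right, this collapses to $\mathbf{e}_{ij} \in \mathcal{C} + W_i$, identifying $\mathcal{C}$ with the row span of $\mathbf{V}$. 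Symmetrically, letting $\mathbf{G}$ be a generator matrix of $\mathcal{C}^\perp$ with columns $\mathbf{g}_{ab}$, the linear map $\mathbf{G} : \Sigma^{pn} \to \Sigma^{pn-k}$ has kernel exactly $\mathcal{C}$, and the GLRC recoverability condition on supersymbol $i$ is precisely $\mathbf{g}_{ij} \in \mathrm{span}\{\mathbf{g}_{ab} : a \in S_i,\ 1 \le b \le p\}$ for every $j$.

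The heart of the argument is then a one-line duality exchange: for every pair $(i,j)$,
\[
\mathbf{e}_{ij} \in \mathcal{C} + W_i \quad\iff\quad \mathbf{g}_{ij} \in \mathrm{span}\{\mathbf{g}_{ab} : a \in S_i,\ 1 \le b \le p\}.
\]
The forward direction takes any decomposition $\mathbf{e}_{ij} = \mathbf{c} + \mathbf{w}$ with $\mathbf{c} \in \ker(\mathbf{G})$ and $\mathbf{w} = \sum w_{ab}\mathbf{e}_{ab} \in W_i$, applies $\mathbf{G}$ to both sides, and reads off $\mathbf{g}_{ij} = \sum w_{ab}\mathbf{g}_{ab}$; the converse lifts any such expression back to a $\mathbf{w} \in W_i$ for which $\mathbf{e}_{ij} - \mathbf{w} \in \ker(\mathbf{G}) = \mathcal{C}$. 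Quantifying over all $(i,j)$ with $1 \le j \le p$ then yields both directions of the stated equivalence at once.

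The main obstacle I anticipate lies in the first bookkeeping step: one has to argue cleanly that the existential statement ``there exists a linear decoder $\phi_i$ agreeing with $\mathbf{x}_i$ on every input message'' is equivalent to the crisp coordinate-vector containment $\mathbf{e}_{ij} \in \mathcal{C} + W_i$, handling the quantification over $\mathbf{x}$ and over the $p$ subsymbols carefully, and noting that the linearity of $\phi_i$ jointly in the transmissions and the side-information subsymbols is essential for the coefficient-matching step. Once this translation and its GLRC mirror are in place, the theorem reduces to the elementary kernel/span swap above, with both implications following simultaneously.
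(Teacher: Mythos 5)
Your proposal is correct, and it takes a genuinely different route from the paper. The paper argues at the level of codewords and decoding functions: given a received transmission $\mathbf{y}$, it writes the unknown message as $\mathbf{x} = \mathbf{w} + \mathbf{z}$ with $\mathbf{V}\mathbf{w} = \mathbf{y}$ and $\mathbf{z} \in \mathcal{C}^{\perp}$, and then uses linearity of the index decoder $\phi_i$ to isolate its action on the null-space component, obtaining $\phi_i(\mathbf{0}, \{\mathbf{z}_j\}_{j\in S_i}) = \mathbf{z}_i$ for all $\mathbf{z} \in \mathcal{C}^{\perp}$, which is the GLRC recoverability condition; the converse builds the index decoder out of the GLRC recovery map and the candidate $\mathbf{w}$, and must separately argue that $\mathbf{y} \mapsto \mathbf{w}$ can be taken linear. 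You instead translate both feasibility conditions into static subspace statements -- linear decodability of user $i$ becomes $\mathbf{e}_{ij} \in \mathcal{C} + W_i$ (a standard reformulation equivalent to the fitting-matrix/minrank view), GLRC recoverability becomes $\mathbf{g}_{ij} \in \mathrm{span}\{\mathbf{g}_{ab} : a \in S_i\}$ -- and then close the loop with a single kernel/span exchange using $\ker(\mathbf{G}) = (\mathcal{C}^\perp)^\perp = \mathcal{C}$. What your route buys is that both implications fall out simultaneously and you never have to manipulate decoding functions as functions, so the technical point the paper must address (linearizing the pre-image map $\mathcal{V}^{-1}$) simply does not arise: the decoder is read off directly from the coefficients in $\mathbf{e}_{ij} = \mathbf{c} + \mathbf{w}$. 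What the paper's more operational route buys is an explicit recipe for passing between a concrete decoder $\phi_i$ and a concrete recovery function $\tilde{\phi}_i$, which some readers may find more transparent. Your bookkeeping step -- that ``there exists a linear $\phi_i$ correct on all $\mathbf{x}$'' is equivalent to the coordinate containment -- is indeed the only place requiring care, and your handling (coefficient matching over all $\mathbf{x}$, quantified over the $p$ subsymbols of supersymbol $i$) is adequate; you might just note explicitly that the converse direction of this reformulation is also immediate, since any expansion $\mathbf{e}_{ij} = \sum_l \alpha_l \mathbf{v}_l + \sum \gamma_{ab}\mathbf{e}_{ab}$ directly furnishes a linear decoder by taking inner products with $\mathbf{x}$.
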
          
             \begin{proof}
                    We first show that if ${\cal C}$ is a valid index code on $\bar{G}$, with generator $\mathbf{V}$ as in (\ref{Eqn:genind}), then the dual code ${\cal C}^{\perp}$ with its generator $\mathbf{G}$ is a valid GLRC code for $\bar{G}$. Consider any user $i$ in the index coding problem. Let the side information set be $S_i$. If ${\cal C}$ is a valid index code, then there exists a vector linear (linear in all the subsymbols)  decoding function $\phi_i: \phi_i \left(\mathbf{y}, \{ \mathbf{x}_j \}_{j \in S_i} \right) = \mathbf{x}_i$. This is true for all message vectors $\mathbf{x}: \mathbf{y}= \mathbf{V}\mathbf{x}$. Let $\mathbf{w}$ be a vector such that $\mathbf{y}=\mathbf{V}\mathbf{w}$. Let $\mathbf{x}$ represent the actual message vector (of all $n$ messages). Let the encoded transmission be $\mathbf{y}$. Then, $\mathbf{x}=\mathbf{w}+ \mathbf{z}$ for some $\mathbf{z} \in {\cal C}^{\perp}$ because ${\cal C}^{\perp }$ is the right null space of $\mathbf{V}$.
 
 Given $\mathbf{y}$, the uncertainty about message vector $\mathbf{x}$ is because of the unknown $\mathbf{z}$ in the null space. In that sense, given the generator $\mathbf{V}$ of the code, one can fix a candidate $\mathbf{w}$ for a given $\mathbf{y}$. Because $\phi_i$ is linear in all the arguments, we have the following chain of inequalities:
                \begin{align} 
                        \hfill &\phi_i (\mathbf{y}, \{\mathbf{x}_j \}_{j \in S_i})  = \mathbf{x}_i \nonumber \\
                         \Rightarrow & \phi_i (\mathbf{y}, \{\mathbf{w}_j +\mathbf{z}_j \}_{j \in S_i})  = \mathbf{w}_i + \mathbf{z}_i \nonumber \\
                         \Rightarrow & \phi_i (\mathbf{y}, \{\mathbf{w}_j \}_{j \in S_i} ) + \phi_i (\mathbf{0}, \{\mathbf{z}_j \}_{j \in S_i} ) = \mathbf{w}_i + \mathbf{z}_i 
                         \label{Eqn:chainind}
                \end{align}
            The last step uses linearity of $\phi_i$. The decoding should work even when $\mathbf{w}$ is the actual message vector. Hence, $\phi_i (\mathbf{y}, \{\mathbf{w}_j \}_{j \in S_i} )= \mathbf{w}_i $. With $(\ref{Eqn:chainind})$, we have: 
                     \begin{equation}
                        \phi_i \left( \mathbf{0}, \{\mathbf{z}_j \}_{j \in S_i} \right) = \mathbf{z}_i
                     \end{equation}   
              Since $\phi_i$ is linear, this implies that every subsymbol of the $i$th code supersymbol is linearly dependent on all the code subsymbols in the set $S_j$ for the dual code ${\cal C}^{\perp}$ since $\mathbf{z} \in {\cal C}^{\perp}$. Hence, the dual code is a valid GLRC proving one direction.
               
                To prove the other direction, let us assume that for every $i:1\leq i \leq n$, there exist functions $\tilde{\phi}_i$ such that :
            \begin{equation} \label{eqn:recovcond}
             \tilde{\phi}_i \left( \{ \mathbf{z}_j \}_{j \in S_i} \right) = \mathbf{z}_i, ~\forall \mathbf{z} \in {\cal C}^{\perp}
              \end{equation}
              
                 Here, $\mathbf{z}$ is a vector of all supersymbols $\mathbf{z}_i$. This means that every supersymbol $i$ of the GLRC code ${\cal C}^{\perp}$ is recoverable from the set $S_i$ of codeword supersymbols.  For the index coding problem, let $\mathbf{x}$ be the message vector not known to the users prior to receiving the encoded transmission. Let $\mathbf{y} = \mathbf{V}\mathbf{x}$. Given $\mathbf{y}$, from the previous part of the proof, we know that $\mathbf{y}= \mathbf{w}+\mathbf{z}$ for some $\mathbf{z} \in {\cal C}^{\perp}$. $\mathbf{w}$ is known to all users from just $\mathbf{y}$ because the code $\mathbf{V}$ employed is known to all the users.
                 
                  Since $\mathbf{z}$ satisfies the recoverability conditions in (\ref{eqn:recovcond}), $\mathbf{w}_i+\tilde{\phi}_i \left( \{ \mathbf{x}_j -\mathbf{w}_j \}_{j \in S_i} \right) = \mathbf{x}_i$. $\mathbf{w}$ is a function of just $\mathbf{y}$ and $\mathbf{V}$.  Hence, user $i$ can recover $\mathbf{x}_i$ from supersymbols from the side information set $S_i$ and the encoded transmission $\mathbf{y}$ for all message vectors $\mathbf{x}$. 
                  
                  We again note that the choice of $\mathbf{w}$ is arbitrary. For every $\mathbf{y}$, users have to pick some $\mathbf{w}$ such that $\mathbf{y}=\mathbf{V}\mathbf{w}$. Since the forward map is linear, the inverse one-to-one map  ${\cal V}^{-1} \left(\mathbf{y}\right)$ determining $\mathbf{w}$ can be made linear by fixing ${\cal V}^{-1} \left( \mathbf{e}_i \right)$ for all unit vectors $\mathbf{e}_1, \ldots \mathbf{e}_k$. Then, linearity of the forward map determines a candidate pre-image for all vectors $\mathbf{y}$, i.e. ${\cal V}^{-1} \left( \mathbf{y} \right) = \sum \limits_{i=1}^k y_i {\cal V}^{-1} \left( \mathbf{e}_i \right) $. Therefore, if $\tilde{\phi}_i$ are all linear in all the subsymbol arguments, then the decoding functions for the index coding problems are also linear.  This completes the proof.
                          
             \end{proof}
        \textbf{Remark:} In the above proof, for the forward direction, we assumed linearity of decoding functions for both the index code and GLRC. For the reverse part, the arguments were more general even admitting non linear decoding functions. Our definitions for the linear index code and GLRC involves only linear decoding. We note that the proof of Theorem $1$ in \cite{bar2011index} implies that for linear index codes, linear decoding is optimal. Although the argument provided in \cite{bar2011index} is only for the scalar binary case, the same argument can be extended to vector linear codes over any field. This, with the above proof, implies that even for a linear GLRC, linear recoverability functions are sufficient for recovery. Altogether, there is no loss of generality in the definitions in this work with respect to decoding. 
                      
             \section{Bounds on the linear multiple unicast sum rate}
            In this section, we derive a polynomial time computable upper bound for $R^{MU} \left( {\cal G}\right)$ which is within $\log \left( \lvert {\cal L} \rvert \right) \log\log\left( \lvert {\cal L} \rvert \right)$ from $R^{CO} \left( {\cal G} \right)$ where ${\cal G}$ is a multiple unicast network instance and ${\cal L}$ is the set of links as defined in the previous section. First, we show that $R^{GLRC} \left( \bar{G} \right)$ can be approximated within a factor of $\log(n) \log\log(n)$ in polynomial time where $\bar{G}$ is a directed graph on $n$ vertices. This uses the duality result in the previous section and existing results in approximating the complementary index coding problem.
             \begin{lemma}\label{lem:approx}
              A valid GLRC on a digraph $\bar{G}$ with rate $r$ satisfying $r \geq \frac{R^{GLRC} \left( \bar{G}\right)}{\log n \log\log (n) }$ can be computed in polynomial time.
              \end{lemma}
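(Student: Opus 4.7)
The plan is to reduce the GLRC approximation problem to complementary index coding via the duality of Theorem \ref{thm:duality}, and then invoke the existing approximation algorithm for complementary index coding from \cite{comp index}.

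First I would observe that Theorem \ref{thm:duality} gives a rate-preserving bijection between valid vector linear index codes on $\bar{G}$ and valid vector linear GLRCs on $\bar{G}$: if ${\cal C}$ is a valid index code of dimension $k$ in $\Sigma^{pn}$, then ${\cal C}^{\perp}$ is a valid GLRC of dimension $np-k$, and conversely. The complementary index coding rate associated to ${\cal C}$ is $n - k/p = (np-k)/p$, which is exactly the normalized rate of the dual GLRC ${\cal C}^\perp$. Consequently,
\begin{equation}
R^{GLRC}\!\left(\bar{G}\right) = R^{CIC}\!\left(\bar{G}\right).
\end{equation}

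The second step is to apply the polynomial-time approximation algorithm for complementary index coding established in \cite{comp index} to the graph $\bar{G}$. This produces, in time polynomial in $n$ (and $p$), an explicit valid linear index code ${\cal C}$ whose complementary rate is at least $R^{CIC}(\bar{G})/(\log n \log\log n)$. I would then compute a basis of its null space ${\cal C}^{\perp}$ by Gaussian elimination, also in polynomial time, and output this as the candidate GLRC. By the forward direction of Theorem \ref{thm:duality}, ${\cal C}^\perp$ is automatically a valid GLRC on $\bar{G}$, and its normalized rate is precisely the complementary index coding rate of ${\cal C}$, which is at least $R^{GLRC}(\bar{G})/(\log n \log\log n)$ by the equality above.

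The main point that needs to be verified is that the approximation result invoked from \cite{comp index} is truly constructive, i.e.\ it returns an explicit generator matrix $\mathbf{V}$ of a valid linear index code rather than merely a numerical bound on $R^{CIC}(\bar{G})$. Since the approximation there is driven by approximate fractional cycle packing via the results of \cite{seymour1995packing,2004packing}, which yield explicit cycle covers in polynomial time, the corresponding index code can be written down explicitly from these cycles; taking its null space then gives the desired GLRC. No further combinatorial work is needed beyond composing these two off-the-shelf ingredients.
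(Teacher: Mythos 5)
Your proposal follows the same route as the paper: use Theorem~\ref{thm:duality} to identify $R^{GLRC}(\bar G)$ with $R^{CIC}(\bar G)$, invoke the polynomial-time complementary-index-coding approximation of \cite{compindex} (built on the cycle-packing results of \cite{seymour1995packing,2004packing}) to get an explicit index code within a $\log n\log\log n$ factor, and output its dual as the GLRC. Your added remarks on constructivity and on computing the null space by Gaussian elimination are correct and only make explicit what the paper leaves implicit.
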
                                  
             \begin{proof}
                The duality result of Theorem \ref{thm:duality} means that a vector linear GLRC has normalized rate $r$ on $\bar{G}$ iff there is a feasible vector linear index code on $\bar{G}$ with complementary index coding rate $r$ such that one code is a dual of the other. Further, it has been shown in \cite{compindex} that the algorithm from \cite{2004packing}, used to find a fractional cycle packing of $\bar{G}$, along with integrality gap results on the feedback edge set problem from \cite{seymour1995packing} yields a vector linear binary index code, whose complementary index coding rate is $r \geq \frac{R^{CIC} \left( \bar{G} \right)}{\log(n) \log \log (n)}$, in polynomial time. By the duality result, we also have $R^{CIC} \left( \bar{G} \right)=R^{GLRC} \left( \bar{G} \right)$.  Further, the dual of the binary vector linear index code obtained is also a valid GLRC with the same rate $r$. Hence, the result follows.
             \end{proof}
             Now, we use this to upper bound $R^{MU} \left( \bar{G} \right)$ in the following theorem which is the main result in this paper.
             
             \begin{theorem} \label{thm:upperbound}
                  $r$ is a polynomially computable function of a multiple unicast network ${\cal G}$ such that $R^{MU} \left( {\cal G} \right) \leq R^{CO} \left( {\cal G} \right) \leq r \log \left( \lvert {\cal L}\rvert \right) \log \log \left( \lvert {\cal L}\rvert \right) \leq R^{CO} \left( {\cal G} \right) \log \left( \lvert {\cal L}\rvert \right) \log \log \left( \lvert {\cal L}\rvert \right) $ where ${\cal L}$ is the set of links.
             \end{theorem}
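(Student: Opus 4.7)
The plan is to chain four inequalities, three of which are easy, and to reduce the entire problem to the already-established GLRC approximation of Lemma \ref{lem:approx}. First I would dispatch the leftmost inequality $R^{MU}({\cal G}) \leq R^{CO}({\cal G})$. By definition, any valid vector linear multiple unicast code is automatically a valid correlated unicast code, since the only distinction is the source-independence constraint, which is dropped in the correlated setting. For an MU code the independence condition implies that the sum rate equals the joint entropy rate $k/p$, and since this code viewed as a CO code has the same joint entropy rate $k/p$, any achievable $R^{MU}$ is an achievable $R^{CO}$.

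The core step is to convert ${\cal G}$ into a recoverability digraph $\bar{G}$ on $m=\lvert {\cal L}\rvert$ vertices so that correlated unicast codes on ${\cal G}$ are exactly GLRCs on $\bar{G}$. I would put one vertex of $\bar{G}$ for each link $e\in{\cal L}$, and set the out-neighborhood $S_e$ as follows: for a non-source link $e$, let $S_e = \{a\in{\cal L} : t(a)=h(e)\}$ (the links that meet the head of $e$, so the intermediate-node coding constraint $\mathbf{z}_e = \phi_e(\{\mathbf{z}_a\}_{a:t(a)=h(e)})$ becomes a recoverability condition), and for a source link $e\in {\cal E}_i$, let $S_e = \{a\in{\cal L} : t(a)=d_i\}$ (the links arriving at the matching destination, so the decoding constraint $\mathbf{z}_e = \hat{\phi}_e^i(\{\mathbf{z}_a\}_{a:t(a)=d_i})$ is captured). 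Writing the generator columns $\mathbf{g}_{ej}$ of a CO code, the two network constraints translate exactly into $\mathbf{g}_{ej}\in \mathrm{span}(\{\mathbf{g}_{ab}\}_{a\in S_e, 1\leq b\leq p})$, which is precisely the GLRC recoverability condition on $\bar{G}$. The converse is symmetric: any GLRC on $\bar{G}$ is, by construction, a valid CO code on ${\cal G}$, and the normalized GLRC rate $k/p$ equals the CO joint entropy rate $k/p$. This shows $R^{CO}({\cal G}) = R^{GLRC}(\bar{G})$.

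With $\bar{G}$ in hand, I would invoke Lemma \ref{lem:approx} applied to $\bar{G}$ (which has $n=m=\lvert {\cal L}\rvert$ vertices) to obtain in polynomial time a valid GLRC whose normalized rate $r$ satisfies $r \geq R^{GLRC}(\bar{G})/(\log m \log\log m)$. This $r$ is the polynomially computable quantity promised in the theorem. Rearranging gives $R^{GLRC}(\bar{G}) \leq r\log m \log\log m$, and since the returned code is a legitimate GLRC we also have $r \leq R^{GLRC}(\bar{G})$. Chaining everything yields
\[
R^{MU}({\cal G}) \leq R^{CO}({\cal G}) = R^{GLRC}(\bar{G}) \leq r\log m \log\log m \leq R^{CO}({\cal G})\log m \log\log m,
\]
which is exactly the statement.

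The only genuinely non-trivial step is the CO-to-GLRC equivalence. The routine direction (CO $\Rightarrow$ GLRC) just reads off the columns, but the reverse direction (GLRC $\Rightarrow$ CO) requires care: given only a GLRC on $\bar{G}$, one must verify that the local encoding functions $\phi_e$ and decoding functions $\hat{\phi}_e^i$ can actually be extracted from the linear dependencies among the columns $\{\mathbf{g}_{ab}\}_{a\in S_e, 1\leq b\leq p}$, that these functions are consistent across all codewords, and that every link symbol is well-defined (no circular dependency) despite the recoverability conditions being stated simultaneously for all edges. The acyclicity of ${\cal G}$ should suffice to resolve this for intermediate-node constraints, while for source links the decoding is a one-shot linear map at each destination. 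I would therefore spend most of the proof-writing effort on this equivalence, since the other three steps are either a definitional inclusion or a direct invocation of Lemma \ref{lem:approx}.
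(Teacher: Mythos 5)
Your proposal follows the paper's proof exactly: first observe $R^{MU}(\mathcal{G}) \le R^{CO}(\mathcal{G})$ because dropping the source-independence constraint can only enlarge the feasible set, then build the recoverability digraph $\bar{G}$ with one vertex per link and out-neighborhoods determined by the intermediate-coding and destination-decoding constraints so that $R^{CO}(\mathcal{G}) = R^{GLRC}(\bar{G})$, and finally apply Lemma \ref{lem:approx} with $n=\lvert\mathcal{L}\rvert$ to get the chain of inequalities. Your extra caution about the GLRC $\Rightarrow$ CO direction is not actually needed: both the GLRC recoverability conditions and the CO coding/decoding conditions are phrased as static span requirements on the generator columns $\mathbf{g}_{ej}$ (not as a causal recursion over a DAG), so the equivalence is definitional and no circular-dependency or consistency issue can arise.
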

              \begin{proof}
                 $R^{MU} \leq R^{CO}$ is clear because, for the multiple unicast code, there is just an added restriction of requiring independence between sources over the correlated unicast code.
                                                   
                 For the other parts, we show that a correlated unicast code on the network $\left( {\cal G} \right)$ is identical to a GLRC on a digraph $\bar{G}\left(V,E \right)$ which we construct as follows: There is a node for every edge in the network, i.e. ${V= \cal L}$. If the edge $e$ is not a source edge, define the recoverability set $S_e= \{ e' \in {\cal L}: h(e')=t(e) \}$. If $e \in {\cal E}_i$ (a source edge feeding into source $i$) for some $i$, then $S_e= \{e' \in {\cal L}: h(e')=d_i \}$. Recoverability set $S_e$ forms the directed out-neighborhood of vertex $e$ in $\bar{G}$. In other words, $(e,e') \in E$ iff $e' \in S_e$. It is easy to see that a GLRC code for $\bar{G}$ of dimension $k$ is exactly the same as a correlated unicast code for ${\cal N}$ of dimension $k$ and vice versa. This is because the decodability conditions at the destinations and local encoding conditions translate to recoverability conditions for the GLRC and vice versa.   Hence, $R^{CO} \left( {\cal G} \right)= R^{GLRC} \left( \bar{G}\right)$.              
                 
                 From Lemma \ref{lem:approx}, we know that there is a polynomial time computable function $r$ which is the rate of a feasible GLRC code (this code can also be obtained) such that $r \leq R^{GLRC} \left( \bar{G} \right) \leq r \log \left( n \right) \log \log (n)$. Since, $R^{GLRC} \left( \bar{G} \right) = R^{CO} \left({\cal G}\right) $, the result in the theorem follows.  
              \end{proof}
              
            We make an important observation: the absence of the condition describing independence of sources in the definition of the correlated unicast code is the prime reason for the equivalence between GLRC and the correlated unicast code.  We observe that the recoverability conditions for GLRC is a 'list of linear dependencies' among a system of vectors.  According to the interference alignment interpretation of \cite{maleki2012index}, index coding is a 'list of linear independencies'. In a very rough sense, we have shown that a list of linear dependencies is the 'dual' of a related list of linear independencies.
We were not able to show a complete equivalence between multiple unicast network code and GLRC because of the condition requiring independence of sources which directly cannot be written as a dependency condition.           
             
             For the multiple unicast network ${\cal G}$, we have taken the number of source links entering source $i$ to be exactly equal to the $\mathrm{mincut}(s_i,d_i)$. If the number of source links in the definition is increased beyond mincut, $R^{MU}$ would not be affected. This is because the rate supported by every source is bounded by the mincut between that source and the destination. However, for the correlated unicast code, increasing the number of source links beyond mincut can increase the joint entropy rate of the correlated unicast code beyond that of the multiple unicast sum rate. It is possible to find such examples. However, with the present definition, where the number of source links is exactly equal to mincut, we have not been able to find a network where $R^{MU} < R^{GLRC}$. 
             
             We believe that, for a multiple unicast network ${\cal G}$ where the number of source links equals the mincut between the corresponding source and destination, i.e. $\lvert {\cal E}_i \rvert = \mathrm{mincut} \left(s_i,d_i\right)$, $R^{MU} \left( {\cal G} \right) = R^{CO} \left( {\cal G} \right)$.
             
             Proving this would mean that the computable function $r$ in Theorem \ref{thm:upperbound} is also a lower bound to the multiple unicast linear sum rate, i.e. $r \leq R^{MU} \left( {\cal G} \right)$. This would mean tractably approximating $R^{MU}$ within a $\log (\lvert {\cal L} \rvert) \log\log (\lvert {\cal L} \rvert)$ factor. Even if this does not exactly hold, finding how these are related could give new lower bounds on the multiple unicast sum rate.
             \section{Conclusion}
                   We showed a duality between Index Codes and Generalized Locally Repairable Codes (GLRCs). Further, approximation algorithms on the complimentary index coding problem together with this duality result give a polynomial time computable upper bound to the optimum linear sum rate of the multiple unicast problem. At the heart of these results, lies the usage of a correlated unicast code that relates GLRC to the multiple unicast problem. Any progress on determining the relationship between the correlated unicast code and the multiple unicast code would lead to a tractable way of approximating the linear sum rate of the multiple unicast problem. 
             
\pagenumbering{arabic}
\bibliographystyle{IEEEtran}
\bibliography{GLRCbibv2}

\end{document}